\journalname{
}
\newcommand{\nint}{\mathbb{N}_0}
\newcommand{\cln}{\text{:\;}}
\newcommand{\Ga}{\mathcal{G}_{m,\sigma}}
\newcommand{\Gl}{\mathcal{G}_{m < \sigma}}
\newcommand{\Gg}{\mathcal{G}_{m \geq \sigma}}
\begin{document}

\title{A Tight Lower Bound for the Weights of Maximum Weight Matching in Bipartite Graphs
}

\titlerunning{A Tight LB for the Weights of Maximum Weight Matching in Bipartite Graphs}        

\author{Shibsankar Das}


\institute{Shibsankar Das\at 
			Department of Mathematics,
			Institute of Science, Banaras Hindu University,
			Varanasi - 221\ 005, Uttar Pradesh, India.\\
              E-mail: shib.iitm@gmail.com, shibsankar@bhu.ac.in
}

\date{
}
\maketitle
\begin{abstract}
%

Let $\Ga$ 
be the collection of all weighted bipartite graphs each having   $\sigma$ and $m$, as the size of a vertex partition and the total weight, respectively. 
We give a tight lower bound  $\lceil  \frac{m-\sigma}{\sigma} \rceil+1$ for the set $\{\textit{Wt}(\textit{mwm}(G))~|~G \in \Ga\}$ which denotes the collection of weights of maximum weight bipartite matchings of all graphs in  $\Ga$.
%
%
%
%
\end{abstract}
\keywords{Maximum weight bipartite matching \and Lower bound for weights of bipartite matching \and  Combinatorial optimization \and  String matching.
}
\section{Introduction}
\label{Basics_MWBM}

We use the notations $\mathbb{N}$ and $\nint$ to denote the sets of positive integers and non-negative integers, respectively.
Let $G=(V = V_1 \cup V_2, E, \textit{Wt})$
be an undirected, weighted bipartite graph 
where $V_1$ and $V_2$ are two non-empty partitions of the vertex set $V$ of $G$, and $E$ is the edge set of $G$ with
positive integer weights on the edges which are given by the weight function 
$\textit{Wt}\cln E \rightarrow \mathbb{N}$.
%
Let $W$ denotes the total weight of $G$ and 
is defined by
$W=\textit{Wt}(G)=  \sum_{e \in E} \textit{Wt}(e)$. 
For uniformity, we treat an unweighted
graph as a weighted graph having a unit weight for all edges.

\subsection{Basics of Maximum Weight Bipartite Matching}
We use the notation $\{u,v\}$ for an edge $e \in E$ between $u \in V_1$ and $v \in V_2$, and its weight is denoted by $\textit{Wt}(e)=\textit{Wt}(u,v)$.
We also say that $e=\{u,v\}$ is \textit{incident} on vertices $u$ and $v$,
and $u$ and $v$ are each \textit{incident} with $e$.
Two vertices $u,v \in V$ of $G$ are \textit{adjacent} if there exists an edge $e=\{u,v\}\in E$ of $G$ to which they are both incident. 
Two edges $e_1,e_2 \in E$ of $G$  are  \textit{adjacent} if there exists a vertex $v\in V$ to which they are both incident~\cite{cormen01}.

A subset $M \subseteq E$ of edges is a \emph{matching} if no two edges of $M$
share a common vertex. A vertex $v \in V$ is said to be \emph{covered} or
\emph{matched} by the matching $M$ if it is incident with an edge of
$M$; otherwise $v$ is \emph{unmatched}~\cite{bondy82,bondy08}.
A matching $M$ of $G$ is called a \textit{maximum} (\textit{cardinality}) \textit{matching} if there
does not exist any other matching of $G$ with greater cardinality. We denote such a
matching by $\textit{mm}(G)$. The weight of a matching $M$ is defined as
$\textit{Wt}(M) = \sum_{e \in M} \textit{Wt}(e)$. A matching $M$ of $G$ is a \emph{maximum weight
matching}, denoted as $mwm(G)$, if $\textit{Wt}(M) \geq \textit{Wt}(M')$ for every other matching
$M'$ of the graph $G$.
Observe that, if $G$ is an unweighted graph then a $\textit{mwm}(G)$ is a $\textit{mm}(G)$, which we write as $\textit{mwm}(G)=\textit{mm}(G)$ in short and its
weight is given by $\textit{Wt}(\textit{mwm}(G))$ $=|\textit{mm}(G)|$. Similarly, if $G$ is an
undirected and weighted graph with $\textit{Wt}(e) = c$ for all edges $e$ in $G$
and $c$ is a constant then also we have $\textit{mwm}(G)=\textit{mm}(G)$ with weight of the
matching as $\textit{Wt}(\textit{mwm}(G))=c*|\textit{mm}(G)|$.


Maximum Weight Bipartite Matching (MWBM) problem is a well-studied problem in combinatorial optimization and algorithmics, and has a wide range of applications (see 
 textbooks~\cite{schrijver03,korte07}). 
Several exact, approximate and randomized algorithms have also been proposed for computing maximum weight bipartite matching~\cite{gabow85,fredman87,gabow89,kao02,das14,duan10,sankowski09}. 

\subsection{Our Contribution}
In this paper, we give a tight lower bound for the weights of MWBM 
in bipartite graphs
having fixed weight and vertex size. 
Let $\Ga$
be the collection of all weighted bipartite graphs, each of whose weight is $m$ and $\sigma$ is the size of each   partition of the vertex set, where $m$ and $\sigma$ are positive integers.
The set of weights of MWBM of the graphs in  $\Ga$ is denoted by $\{\textit{Wt}(\textit{mwm}(G))~|~G \in \Ga\}$. 
%
%
We prove that $\lceil  \frac{m-\sigma}{\sigma} \rceil+1$ is a lower bound of $\{\textit{Wt}(\textit{mwm}(G))~|~G \in \Ga\}$ 
and this bound is tight. 
%
%
%

This result can be applied to develop 
optical packet switches to transform data center scalability~\cite{benjamin17}. It can also pertain to the area of stringology~\cite{hazay07}.
An equivalent theorem (Theorem 3.14 in \cite{mythesis}) of this outcome is applied for enumerating error classes in approximate     parameterized string matching.



\subsection{Roadmap}
The rest of the paper is organized as follows. 
In Section~\ref{capsm:Sec:Tight lower bound_MWBM},
we partition the class of graphs in $\Ga$ into two subclasses and 
provide a tight lower bound for the weights of MWBM of graphs in $\Ga$.
A summary  is given in Section~\ref{Sec:tlb_conclusion}.
\section{A Tight Lower Bound for the Weights of Maximum Weight Bipartite Matching in $\Ga$}
\label{capsm:Sec:Tight lower bound_MWBM}
Let $\Ga$ denotes the collection of all weighted bipartite graphs, each of whose weight is fixed to $m$ and $\sigma$ is the size of each of the two vertex partitions of any graph in $\Ga$.
Let us consider the pair of non-empty partitions of the vertex sets of all the bipartite graphs in $\Ga$ be $\varSigma_P$ and $\varSigma_T$. Therefore, 
$|\varSigma_P|=|\varSigma_T|=\sigma$.
%
%
We partition $\Ga$ as $\Ga = \mathcal{G}_{m \geq \sigma} \cup \mathcal{G}_{m < \sigma}$, 
where
%
%
$$ \Gg \equiv \{\textit{G}=(\varSigma_P \cup \varSigma_T,E,\textit{Wt})~|~ \sigma = |\varSigma_P|=|\varSigma_T|,\ m=\textit{Wt}(G),\ m \geq \sigma \}$$
and
$$ \Gl \equiv \{\textit{G}=(\varSigma_P \cup \varSigma_T,E,\textit{Wt})~|~ \sigma = |\varSigma_P|=|\varSigma_T|,\ m=\textit{Wt}(G),\ m < \sigma \}.$$
Now we prove that the value of $\min_{G \in \Ga}\{ \textit{Wt}(\textit{mwm}(G)) \}$, which denotes minimum weight among the maximum weight bipartite matchings of all the graphs in  $\Ga$, is $ \lceil  \frac{m-\sigma}{\sigma} \rceil+1$. 
Let us first prove it for $\Gg \subseteq \Ga$.

Since $m  \geq \sigma$, we can always write $m$ as 
$q \sigma + r$ for some  $q,\,r \in \nint$  where $0 < r \leq \sigma$.
First, we show the existence of bipartite graph $G \in \Gg$ such that $\textit{Wt}(\textit{mwm}(G))= q+1$.
We then prove  in Theorem~\ref{capsm:Th:min_MWBMs} that $q+1$ is the tight lower bound of the set $\{ \textit{Wt}(\textit{mwm}(G))~|~G\in \Gg\}$.
\begin{theorem}
\label{capsm:Th:min_MWBMs_Existence}
Let $\Gg=\{G=(\varSigma_P \cup \varSigma_T,E,\textit{Wt})~|~ \sigma = |\varSigma_P|=|\varSigma_T|,\, m=\textit{Wt}(G) \text{ and } m \geq \sigma \}$.
If $m=q\sigma + r$ for some non-negative integers  $q$ and  $r$ where $0 < r \leq \sigma$,
then there exists 
a bipartite graph $G \in \Gg$ such that $\textit{Wt}(\textit{mwm}(G))= q+1$.
\end{theorem}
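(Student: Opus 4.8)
The approach I would take is constructive: I would exhibit an explicit graph $G \in \Gg$ whose maximum weight matching has weight exactly $q+1$. The guiding observation is that any matching contains at most one edge incident to a given vertex. Hence if every edge of $G$ is made incident to a single common vertex---that is, if $G$ is a star---then every matching of $G$ consists of at most one edge, and consequently $\textit{Wt}(\textit{mwm}(G))$ equals the largest edge weight of $G$. The problem then reduces to distributing the total weight $m$ over the edges of a star so that the heaviest edge has weight precisely $q+1$.

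Concretely, I would pick one vertex $u \in \varSigma_P$ and join it by an edge to each of the $\sigma$ vertices of $\varSigma_T$, leaving the remaining $\sigma - 1$ vertices of $\varSigma_P$ isolated so that $|\varSigma_P| = |\varSigma_T| = \sigma$ is preserved. I would then assign weight $q+1$ to $r$ of these edges and weight $q$ to the remaining $\sigma - r$ edges. The total weight is $r(q+1) + (\sigma - r)q = q\sigma + r = m$, as required, and the heaviest edge has weight $q+1$ (since $r \geq 1$). Because all $\sigma$ edges share the vertex $u$, every matching of $G$ has at most one edge; the best such choice is a heaviest edge, giving $\textit{Wt}(\textit{mwm}(G)) = q+1$.

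The remaining verification is to confirm $G \in \Gg$ and that all weights are positive integers. Membership is immediate: the partition sizes equal $\sigma$, the total weight equals $m$, and $m \geq \sigma$ by hypothesis. For positivity I would use the hypotheses $m \geq \sigma$ together with $m = q\sigma + r$ and $0 < r \leq \sigma$: these force either $q \geq 1$, in which case both $q$ and $q+1$ are positive, or $q = 0$ and $r = \sigma$, in which case $\sigma - r = 0$ so that no edge of weight $q = 0$ is ever created and all $\sigma$ edges receive weight $q + 1 = 1$.

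The only delicate points, on which I would concentrate the rigor, are this degenerate case $q = 0$ (guaranteeing that the construction never introduces a zero-weight edge) and the formal justification that a star admits no matching with more than one edge, so that the maximum weight matching genuinely coincides with the single heaviest edge. Everything else is a routine arithmetic check.
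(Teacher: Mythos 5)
Your proposal is correct and follows essentially the same route as the paper: the paper's proof also exhibits an explicit graph (shown in its Figure~\ref{capsm:Fig2}) in which the maximum weight matching is a single edge of weight $q+1$, treating $q=0$ and $q\geq 1$ separately just as you do. Your write-up is in fact more self-contained, since you give the weight distribution ($r$ edges of weight $q+1$, $\sigma-r$ of weight $q$ on a star) and the degenerate-case check explicitly rather than by reference to a figure.
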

\begin{proof}
\begin{figure*}[!h]
\centering
\includegraphics[angle=0,width=.9275\textwidth, trim=0 -5 0 5]{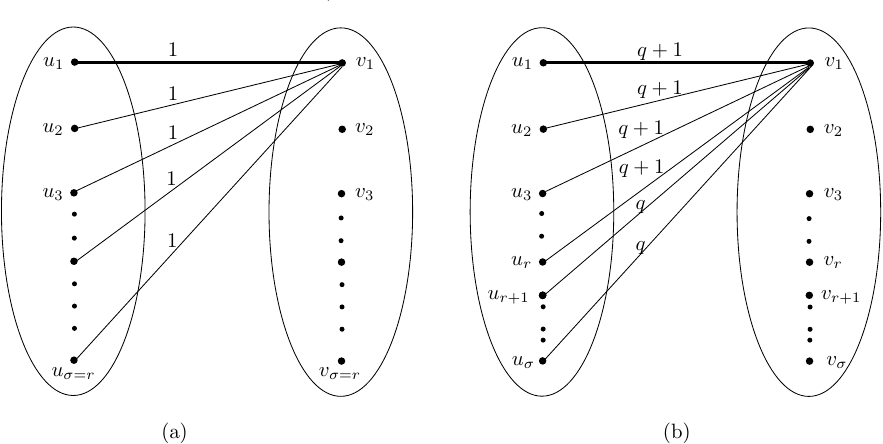}
\caption[Existence of bipartite graph $G \in \mathcal{G}=\{\textit{G}=(\varSigma_P \cup \varSigma_T,E,\textit{Wt})~|~ \sigma = |\varSigma_P|=|\varSigma_T|, m=\textit{Wt}(G), m \geq \sigma \}$ such that $\textit{Wt}(\textit{mwm}(G))= q+1$
where $ m=q\sigma + r$ for some $q,r \in \nint$ and $0 < r \leq \sigma$.]
{Given $ m=q\sigma + r$ for some $q,r \in \nint$ where $0 < r \leq \sigma$ and $\Gg =\{G=(\varSigma_P \cup \varSigma_T,E,\textit{Wt})~|~ \sigma = |\varSigma_P|=|\varSigma_T|,\, m=\textit{Wt}(G),\, m \geq \sigma \}$ such that
 $\min_{G \in \Gg}\{ \textit{Wt}(\textit{mwm}(G)) \} = q+1$. {\bf(a)} An example of bipartite graph for the case $q=0$. {\bf(b)} 
 An example of bipartite graph for the case
 $q \geq 1$. 
 In both the graphs the thick edge represents maximum weight matching edge.}
\label{capsm:Fig2}
\end{figure*}
%
For the case $q=0$, we have $m=q\sigma+r=r=\sigma$ as $0 < r \leq \sigma$ and $m \geq \sigma$. 
Figure~\ref{capsm:Fig2}(a) shows a bipartite graph $G'=(\varSigma_P \cup \varSigma_T,E',\textit{Wt}) \in \Gg$ for this case. The weight of the graph is $\textit{Wt}(G')=\sigma$. 
In this graph $G'$, $\textit{Wt}(\textit{mwm}(G'))= 1=q+1$.

For $q \geq 1$, the total weight of any bipartite graph in $\Gg$ is $ m=q\sigma + r$. We produce such a bipartite graph $G'' \in \Gg$  shown in 
Figure~\ref{capsm:Fig2}(b)
with $\textit{Wt}(\textit{mwm}(G''))= q+1$.
\end{proof}

%
%
Observe that in a weighted graph $G$, any edge $e$ of weight $c \in \mathbb N$ can be thought of as $c$ number of overlapping unit weight edges.
Similarly, increasing the weight of a bipartite graph $G$ by adding a weight $c \in \mathbb{N}$ is equivalent to adding $c$  unit weight edges in $G$. 
Without loss of generality, we assume these as  a convention while incrementing weight in a weighted graph. 

\begin{theorem}[Tight Lower Bound for the Weights of MWBM of the Graphs in $\Gg$]
\label{capsm:Th:min_MWBMs}
Let $\Gg=\{\textit{G}=(\varSigma_P \cup \varSigma_T,E,\textit{Wt})~|~ \sigma = |\varSigma_P|=|\varSigma_T|,\, m=\textit{Wt}(G) \text{ and } m \geq \sigma \}$. Then
$$ \min_{G \in \Gg}\{ \textit{Wt}(\textit{mwm}(G)) \} = q+1 $$ 
where $m=q\sigma + r$ for some non-negative integers  $q$ and  $r$, and $0 < r \leq \sigma.$
\end{theorem}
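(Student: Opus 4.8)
The statement has two halves, and the harder half is the only one left open: Theorem~\ref{capsm:Th:min_MWBMs_Existence} already exhibits a graph $G \in \Gg$ with $\textit{Wt}(\textit{mwm}(G)) = q+1$, which immediately gives $\min_{G \in \Gg}\{\textit{Wt}(\textit{mwm}(G))\} \le q+1$. Hence the entire remaining task is the matching lower bound: every $G \in \Gg$ must satisfy $\textit{Wt}(\textit{mwm}(G)) \ge q+1$. My plan is to isolate a single clean inequality, namely $\sigma\cdot\textit{Wt}(\textit{mwm}(G)) \ge m$ for every $G \in \Gg$, prove it, and then read off $q+1$ from it by integrality of the weights.

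To establish $\sigma\cdot\textit{Wt}(\textit{mwm}(G)) \ge m$ I would average a matching weight over all bijections between the two parts. Since $|\varSigma_P| = |\varSigma_T| = \sigma$, choose a bijection $\pi\cln\varSigma_P \to \varSigma_T$ uniformly at random and let $M_\pi$ be the set of those pairs $\{u,\pi(u)\}$ that are genuinely edges of $G$; this $M_\pi$ is a matching because $\pi$ is injective. For any fixed pair $(u,v)$ one has $\Pr[\pi(u)=v] = 1/\sigma$, so the expected weight of $M_\pi$ equals $\sum_{\{u,v\}\in E}\textit{Wt}(u,v)/\sigma = m/\sigma$. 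Consequently some bijection yields a matching of weight at least $m/\sigma$, whence $\textit{Wt}(\textit{mwm}(G)) \ge m/\sigma$, i.e. $\sigma\cdot\textit{Wt}(\textit{mwm}(G)) \ge m$. Equivalently, one can argue through the weighted K\"onig--Egerv\'ary duality: an optimal integral weighted cover $y$ satisfies $m = \sum_{\{u,v\}\in E}\textit{Wt}(u,v) \le \sum_{\{u,v\}\in E}(y_u+y_v) = \sum_v \deg(v)\,y_v \le \sigma\sum_v y_v = \sigma\cdot\textit{Wt}(\textit{mwm}(G))$, using $\deg(v)\le\sigma$ and $y_v \ge 0$.

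With the key inequality in hand the conclusion is routine. Writing $m = q\sigma + r$ with $0 < r \le \sigma$, we get $\textit{Wt}(\textit{mwm}(G)) \ge m/\sigma = q + r/\sigma > q$. Because the edge weights are positive integers, $\textit{Wt}(\textit{mwm}(G))$ is itself an integer, so it is at least $q+1$; this single line covers both $r<\sigma$ (where $\lceil m/\sigma\rceil = q+1$) and $r=\sigma$ (where $m/\sigma = q+1$ exactly). Combining this with the upper bound furnished by Theorem~\ref{capsm:Th:min_MWBMs_Existence} yields $\min_{G\in\Gg}\{\textit{Wt}(\textit{mwm}(G))\} = q+1$, which is what is claimed.

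The main obstacle will be pinning the constant in the key inequality to exactly $\sigma$ rather than a weaker factor: a naive bound such as ``heaviest edge times number of edges'' only gives $\sigma^2$, which is far too lossy to recover $q+1$. The averaging-over-bijections step is precisely what forces the sharp factor, and it is here that the hypothesis $|\varSigma_P|=|\varSigma_T|=\sigma$ is used through $\Pr[\pi(u)=v]=1/\sigma$. The convention stated just before the theorem --- that a weight-$c$ edge is $c$ overlapping unit edges --- lets one phrase the very same counting purely combinatorially: distribute the $m$ unit edges over the $\sigma\times\sigma$ array of vertex pairs and find a generalized diagonal carrying at least $m/\sigma$ of them. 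The only other delicate point is the boundary behaviour of the ceiling at $r=\sigma$, which the integrality argument above disposes of uniformly.
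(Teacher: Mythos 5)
Your proposal is correct, but it takes a genuinely different route from the paper. The paper proves the lower bound by induction on $q$: it regards any graph of weight $(i+1)\sigma+r$ as obtained from a graph of weight $i\sigma+r$ by adding $\sigma$ units of weight, and then runs a three-case analysis (an added edge lies in the old maximum weight matching or joins two unmatched vertices; an added edge of weight at least $2$ touches matched vertices; all added weight is spread over $\sigma$ unit edges, two of which must be non-adjacent), showing in each case that either the new graph has a matching of weight exceeding $i+1$, or one unit of weight can be shifted off a matching edge to produce a graph in $\mathcal{G}_{i+1}$ whose maximum weight matching has weight $i$, contradicting the induction hypothesis. You replace all of this with the single universal inequality $\sigma\cdot\textit{Wt}(\textit{mwm}(G))\geq m$, proved by averaging over all $\sigma!$ bijections $\varSigma_P\to\varSigma_T$ (equivalently, by decomposing the edge set of $K_{\sigma,\sigma}$ into $\sigma$ perfect matchings and taking the heaviest, or by the weighted K\"onig--Egerv\'ary duality with the max-degree bound $\deg(v)\leq\sigma$), and then use integrality of matching weights to pass from $\textit{Wt}(\textit{mwm}(G))>q$ to $\textit{Wt}(\textit{mwm}(G))\geq q+1$; tightness comes from Theorem~\ref{capsm:Th:min_MWBMs_Existence}, exactly as in the paper. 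Your approach buys brevity and robustness: it is non-inductive, avoids the paper's delicate bookkeeping about how added weight interacts with an existing optimal matching, yields the stronger pointwise statement $\textit{Wt}(\textit{mwm}(G))\geq\lceil m/\sigma\rceil$ for \emph{every} $G\in\Gg$ rather than only the minimum over the class, and generalizes readily (e.g.\ to unbalanced partitions via the degree-based duality estimate). What the paper's argument buys is self-containment: it uses no probabilistic averaging and no appeal to duality or to a matching decomposition of $K_{\sigma,\sigma}$, and its case analysis exhibits concretely how weight increments shift the optimum, in the same constructive spirit as Theorem~\ref{capsm:Th:min_MWBMs_Existence}.
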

\begin{proof}
For $\sigma=1$, 
the statement is trivially true. 
%
%
%
So we consider $\sigma \geq 2$ and prove the statement $\min_{G \in \Gg}\{ \textit{Wt}(\textit{mwm}(G)) \} = q+1$ by
induction on $q \in \nint$.
Let $\varSigma_P = \{u_1,u_2, \ldots, u_\sigma\}$ and $\varSigma_T = \{v_1,v_2, \ldots, v_\sigma\}$ be the disjoint vertex sets of the graphs in $\Gg$. 
For simplicity, we denote $\mathcal{G}_{q+1} =\Gg$ when $m=q\sigma+r$ for some $q,r \in \nint$ where $0 < r \leq \sigma$, that is,  $q=\lceil  \frac{m-\sigma}{\sigma}  \rceil$  where $q$ is represented as a function of $m$ and $\sigma$ only.
%

\begin{description}
\item[Base Step:] 
Let $q=0$. Then $m=r = \sigma$ because $0 < r \leq \sigma$ and $m \geq \sigma$, and 
$$\mathcal{G}_1=\{\textit{G}=(\varSigma_P \cup \varSigma_T,E,\textit{Wt})~|~ \sigma = |\varSigma_P|=|\varSigma_T|, \textit{Wt}(G)= \sigma \}.$$
Since for any graph $\textit{G}=(\varSigma_P \cup \varSigma_T,E,\textit{Wt}) \in \mathcal{G}_1$, $|\varSigma_P|=|\varSigma_T|=\sigma$ and $ \textit{Wt}(G)= \sigma$, therefore $\min_{G \in \mathcal{G}_1}\{\textit{Wt}(\textit{mwm}(G)) \} = 1=q+1$.
%
%
%
%
\item[Induction Hypothesis:] 
Assume that for $q=i$, 
$\min_{G \in \mathcal{G}_{i+1}}\{ \textit{Wt}(\textit{mwm}(G)) \} = i+1$, where
\begin{align*}
& m = i\sigma+r, ~\text{and}\\
& \mathcal{G}_{i+1} =\{\textit{G}=(\varSigma_P \cup \varSigma_T,E,\textit{Wt})~|~ \sigma = |\varSigma_P|=|\varSigma_T|, \textit{Wt}(G)=i\sigma+r \}.
\end{align*}
Let $\mathcal{G}'_{i+1}=\{G \in \mathcal{G}_{i+1}~|~\textit{Wt}(\textit{mwm}(G)) = i+1\}$. 
The set $\mathcal{G}'_{i+1}$ is non-empty by the Theorem~\ref{capsm:Th:min_MWBMs_Existence}.
We use this set in the following inductive step.
\item[Inductive Step:]
Let  $q=i+1$. We have to prove that
$\min_{G \in \mathcal{G}_{i+2}}\{ \textit{Wt}(\textit{mwm}(G)) \} = i+2$, where
\begin{align*}
 &m = (i+1)\sigma+r,~\text{and}\\
 &\mathcal{G}_{i+2}=\{\textit{G}=(\varSigma_P \cup \varSigma_T,E,\textit{Wt})~|~ \sigma = |\varSigma_P|=|\varSigma_T|,~\textit{Wt}(G)=(i+1)\sigma+r \}.
\end{align*}
%
The existence of a graph $G \in \mathcal{G}_{i+2}$ with $\textit{Wt}(\textit{mwm}(G)) = i+2$ is proved in Theorem~\ref{capsm:Th:min_MWBMs_Existence}. 
Therefore, we only have to prove that there does not exist any graph in $\mathcal{G}_{i+2}$ whose weight of a maximum weight matching is $i+1$. 
Let us prove it by contradiction. Suppose there exists a graph $ G_* \in \mathcal{G}_{i+2}$ such that $\textit{Wt}(\textit{mwm}(G_*)) = i+1$.


Observe that, for any graph in $\mathcal{G}_{i+2}$, 
its weight is equal to $m=(i+1)\sigma+r=(i\sigma+r) + \sigma$.
Therefore, any graph in  $\mathcal{G}_{i+2}$ is generated by adding a total of $\sigma$ weight to the non-negative weight edges of a graph in $\mathcal{G}_{i+1}$. 

Therefore, $G_*$ can only be constructed from a graph in $\mathcal{G}'_{i+1}$ by adding a total of $\sigma$ weight to the non-negative weight edges of that graph in $\mathcal{G}'_{i+1}$;  
because for all $G \in \mathcal{G}_{i+1}\setminus\mathcal{G}'_{i+1}$, $\textit{Wt}(\textit{mwm}(G)) > i+1$.
%
Let $\varSigma = \{e_1,e_2,e_3, \ldots\}$ be the edges, where $\sigma=\sum_{e_i \in \varSigma}{\textit{Wt}(e_i)}$, whose weights are increased in $G \in \mathcal{G}'_{i+1}$ to build  $G_*$.
%
 
%
%
%
%
\paragraph{\bf\textit{Case 1:}}
Let  $G \in \mathcal{G}'_{i+1}$ and $M = \textit{mwm}(G)$. 
If there exists at least one edge $e$ in $\varSigma$ such that
$e \in M$  
or if both the end points of $e$ are unmatched vertices, 
then let $M'=M \cup \{e\}$, which is a weighted matching of $G_*$, not necessarily of maximum weight. 
Therefore $$\textit{Wt}(\textit{mwm}(G_*)) \geq \textit{Wt}(M') = \textit{Wt}(M)+ \textit{Wt}(e) = i+1+\textit{Wt}(e) > i+1$$ which is a contradiction because we assumed that $(\textit{mwm}(G_*))=i+1$.

\textit{Note}: Hence for the rest of the cases we assume that 
none of the edges in $\varSigma$, which are added in $G \in \mathcal{G}'_{i+1}$ to get the $G_* \in \mathcal{G}_{i+2}$, belongs to $M$; or both the end points of none of the edges in $\varSigma$ are unmatched vertices. 
Therefore if $e=\{u,v\} \in \varSigma$, then: 
(a) $u$ is an unmatched vertex and $v$ is a matched vertex or vice versa, or
(b) both $u$ and $v$ are matched vertices, but $e \notin  M = \textit{mwm}(G)$.
%
\paragraph{\bf\textit{Case 2:}}
Let there exists at least one edge $e=\{u,v\} \in \varSigma$ such that $\textit{Wt}(e)=w_\sigma \geq 2$.
Then we have the following two sub-cases which  are shown in Figure~\ref{capsm:Fig3}. Let $G \in \mathcal{G}'_{i+1}$ and $M = \textit{mwm}(G)$.

\begin{figure*}[!]
\centering
\includegraphics[angle=0,width=.9972\textwidth, trim=0 -5 0 5]{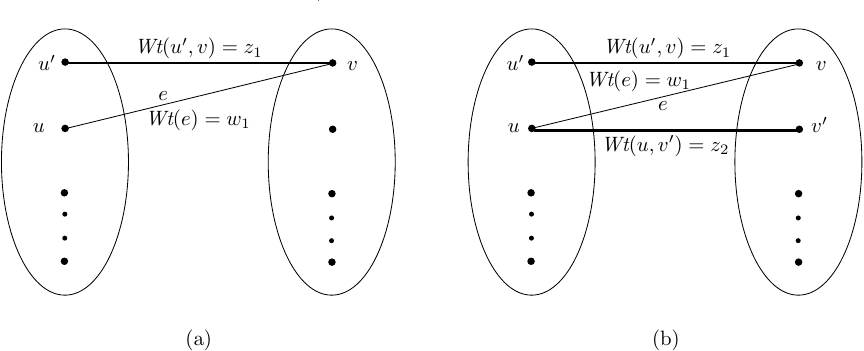}
\caption[Pictorial representation of the Sub-case~2(a) and Sub-case~2(b) in Theorem~\ref{capsm:Th:min_MWBMs}.]
{{\bf(a)} This graph gives a pictorial representation of the Sub-case~2(a) in Theorem~\ref{capsm:Th:min_MWBMs}. {\bf(b)} Sketch of the graph considered in Sub-case~2(b) is shown here. In both the graphs the thick edges are maximum weight matching edges.}
\label{capsm:Fig3}
\end{figure*}

\textbf{Sub-case 2(a):} Assume that $u$ and $v$ be the unmatched and matched vertices in $G \in \mathcal{G}'_{i+1}$, respectively. 
So there exists an  edge $e'=\{u',v\} \in M$ which is incident on the matched vertex $v$.
Let $\textit{Wt}(e')=\textit{Wt}(u',v)=z_1$ and $\textit{Wt}(e)=\textit{Wt}(u,v)=w_1$ in the $G$. Therefore $z_1 \geq w_1$. 
Now add  the edge $e$ (or increase the edge weight of $e$) in $G$ where $\textit{Wt}(e)=w_\sigma \geq 2$ in order to generate 
$ G_* \in \mathcal{G}_{i+2}$ such that $\textit{Wt}(\textit{mwm}(G_*)) = i+1$.

If $ z_1 < w_1 +w_\sigma$, then let
$$M' = M \setminus \{e'\} \cup \{e\}$$ which is a weighted matching of $G_*$.
Hence $$\textit{Wt}(\textit{mwm}(G_*)) \geq \textit{Wt}(M')
= \textit{Wt}(M)-z_1+w_1+w_\sigma
= (i+1)-z_1+w_1+w_\sigma
>i+1$$ which is a contradiction.

Or else,  $$z_1 \geq w_1+w_\sigma ~\Leftrightarrow~ z_1-1 \geq w_1+(w_\sigma-1).$$
Therefore 
we can construct a new graph $G'$ from $G$ by decreasing one unit weight of the edge $e'=\{u',v\} \in M$ and increasing the weight of the edge $e=\{u,v\} \notin M$  by  one unit in $G$.
As a consequence, the weight of $G'$ remains the same as that of $G \in \mathcal{G}'_{i+1}$ and so $G' \in \mathcal{G}_{i+1}$. 
But  
$$\textit{Wt}(\textit{mwm}(G')) = i < \textit{Wt}(M) = i+1$$ 
which contradicts  the induction
hypothesis that $\min_{G \in \mathcal{G}_{i+1}}\{ \textit{Wt}(\textit{mwm}(G)) \} = i+1$.

\textbf{Sub-case 2(b):} Suppose both $u$ and $v$ are matched vertices but $e=\{u,v\} \notin M$. See Figure~\ref{capsm:Fig3}(b).
So there exist two edges $e'=\{u',v\} \in M$ and $ e''=\{u,v'\} \in M$ which are incident on the matched vertices $v$ and $u$, respectively.
Let $\textit{Wt}(e')=z_1$, $\textit{Wt}(e'')=z_2$ and $\textit{Wt}(e)=w_1$ in $G \in \mathcal{G}'_{i+1}$. 
$$So,~ z_1+z_2 \geq w_1\quad\text{in } G.$$
Now after adding the edge $e$ in $G$ with $\textit{Wt}(e)=w_\sigma \geq 2$, if 
$$ z_1+z_2 < w_1 +w_\sigma,$$ then let
$$M' =M \setminus \{e',e''\} \cup \{e\}$$ which is a weighted matching of $G_*$.
Hence $$\textit{Wt}(\textit{mwm}(G_*)) \geq \textit{Wt}(M') 
= \textit{Wt}(M)-z_1-z_2+w_1+w_\sigma 
= (i+1)-z_1-z_2+w_1+w_\sigma 
> i+1 $$ which is a contradiction.

Or else,  $$z_1+z_2 \geq w_1+w_\sigma ~ \Leftrightarrow ~ (z_1-1)+z_2 \geq w_1+(w_\sigma-1).$$
%
Therefore 
we can construct a new graph $G'$ from $G$ by reducing one unit weight of the edge $e'=\{u',v\} \in M$ and adding one unit weight to the edge $e=\{u,v\} \notin M$ 
 of $G$.
As a consequence, the weight of $G'$ is the same as that of $G \in \mathcal{G}'_{i+1}$ and so $G' \in \mathcal{G}_{i+1}$. 
But  
$$\textit{Wt}(\textit{mwm}(G'))=i < \textit{Wt}(M) = i+1$$ 
which contradicts the 
hypothesis that $\min_{G \in \mathcal{G}_{i+1}}\{ \textit{Wt}(\textit{mwm}(G)) \} = i+1$.
%
%
\paragraph{\bf\textit{Case 3:}} 
Let for each edge $e\in \varSigma$, $\textit{Wt}(e)= 1$. 
Consider $\varSigma=\{e_1=\{u_1,v_1\},e_2=\{u_2,v_2\}, \ldots, e_\sigma=\{u_\sigma,v_\sigma\}\}$ and their respective weights in $G \in \mathcal{G}'_{i+1}$ are given by $\{w_1,w_2,\ldots,w_\sigma\}$.
We add these $\sigma$ number of edges of $\varSigma$ in $G \in \mathcal{G}'_{i+1}$ to produce a graph
$ G_* \in \mathcal{G}_{i+2}$ such that $\textit{Wt}(\textit{mwm}(G_*)) = i+1$.
Further let $M = \textit{mwm}(G)$.
%

Therefore, there must exist two edges in $\varSigma$ which are not adjacent.
Because if not, then  all the edges of $\varSigma$ are adjacent to one vertex. Without loss of generality, suppose $u_1=u_2= \cdots = u_\sigma$. See Figure~\ref{capsm:Fig5} and consider the following two possibilities.
\begin{enumerate}[(a)]
\item
If $u_1 \in \varSigma_P$ is an unmatched vertex in $G \in \mathcal{G}'_{i+1}$, then there must be another unmatched vertex in $\varSigma_T$ of the graph $G$, because $\sigma = |\varSigma_P|=|\varSigma_T|$. Say the unmatched vertex is $v_1 \in \varSigma_T$. 
If we add an edge $\{u_1,v_1\} \in \varSigma$ in $G \in \mathcal{G}'_{i+1}$, then this kind of graph is already addressed in Case~1. 
Therefore,  at most $\sigma-1$ number of edges of unit weight can be added in $G$ while generating the $G_*$. 
This is a contradiction. 
\item
Similarly, if $u_1 \in \varSigma_P$ is a matched vertex in $G \in \mathcal{G}'_{i+1}$, then there must be another matched vertex in $\varSigma_T$ of the graph $G$. The rest of the argument is similar to the previous unmatched case.
\end{enumerate}

\begin{figure*}[htt]
\centering
\includegraphics[angle=0,width=.632\textwidth, trim={0 -5 0 0}]{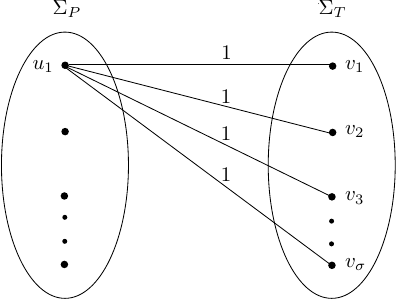}
\caption[Kind of graph does not arise in  Case~3 of Theorem~\ref{capsm:Th:min_MWBMs}.]
{There must exists two edges $e_1,e_2 \in \varSigma$ such that $e_1$ and $e_2$ are not adjacent. This kind of graph does not arise in Case~3 of Theorem~\ref{capsm:Th:min_MWBMs}.}
\label{capsm:Fig5}
\end{figure*}

So we assume the two non-adjacent edges be $e_1, e_2 \in \varSigma$.
Then a maximum of four edges in $M$ are adjacent to edges $e_1,e_2 \in \varSigma$. Let $e_1',e_2',e_3',e_4'$ be such edges and  $z_1,z_2,z_3,z_4$ be their corresponding weights in $G \in \mathcal{G}'_{i+1}$, respectively.  
Therefore, $$ ~ z_1 +z_2+z_3+z_4 \geq w_1 +w_2 \quad\text{in }  G.$$
Now after adding $\sigma$ edges of $\varSigma$ in $G$, if
$$z_1 +z_2+z_3+z_4 < w_1 +w_2 + 2,$$
then let
$$M'=M \setminus \{e_1',e_2',e_3',e_4'\} \cup \{e_1,e_2\}$$
which is a weighted matching of $G_*$.
Hence 
\begin{align*}
\textit{Wt}(\textit{mwm}(G_*)) 
& \geq \textit{Wt}(M')\\
& = \textit{Wt}(M)-(z_1+z_2+z_3+z_4)+(w_1+w_2+2)\\
& > i+1
\end{align*}
which is a contradiction.
 
Or else,  
\begin{align*}
 & z_1+z_2+z_3+z_4 \geq w_1+w_2+2 \\ \Leftrightarrow 
~& z_1 + z_2+z_3+z_4-1 \geq w_1+w_2+1.
\end{align*}
As a consequence, by similar argument as stated in Sub-case~2(b), we can construct a new graph $G'$ 
whose weight is same as that of $G \in \mathcal{G}'_{i+1}$ and  so $G' \in \mathcal{G}_{i+1}$. 
But  
$$\textit{Wt}(\textit{mwm}(G'))=i < \textit{Wt}(M) = i+1$$ 
which contradicts the induction
hypothesis that $\min_{G \in \mathcal{G}_{i+1}}\{ \textit{Wt}(\textit{mwm}(G)) \} = i+1$.
\end{description}
This completes the proof.
\end{proof}

An equivalent statement of the  Theorem~\ref{capsm:Th:min_MWBMs} is the following.
\begin{corollary}
\label{capsm:Cor:min_MWBMs}
For the partition  
$ \Gg \equiv \{\textit{G}=(\varSigma_P \cup \varSigma_T,E,\textit{Wt})~|~ \sigma = |\varSigma_P|=|\varSigma_T|, m=\textit{Wt}(G) \text{ and } m \geq \sigma \}$ 
$$ \min\limits_{G \in \Gg}\{ \textit{Wt}(\textit{mwm}(G)) \} = \Big\lceil  \frac{m-\sigma}{\sigma}  \Big\rceil+1. $$
\end{corollary}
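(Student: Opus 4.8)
The plan is to derive the corollary directly from Theorem~\ref{capsm:Th:min_MWBMs}, so the entire task reduces to verifying a single arithmetic identity. By the theorem, $\min_{G \in \Gg}\{\textit{Wt}(\textit{mwm}(G))\} = q+1$, where $q$ and $r$ are the unique non-negative integers satisfying $m = q\sigma + r$ with $0 < r \leq \sigma$. It therefore suffices to prove that $q = \lceil \frac{m-\sigma}{\sigma}\rceil$, after which adding $1$ to both sides yields the claimed closed form and the two statements become verbatim equivalent.

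To establish the identity, first I would substitute $m = q\sigma + r$ into the expression inside the ceiling, obtaining $\frac{m-\sigma}{\sigma} = \frac{(q-1)\sigma + r}{\sigma} = (q-1) + \frac{r}{\sigma}$. Since $0 < r \leq \sigma$, the fractional contribution $\frac{r}{\sigma}$ lies in the half-open interval $(0,1]$, so the whole quantity lies in $(q-1,\,q]$. By the definition of the ceiling function, every real number in $(q-1,\,q]$ has ceiling exactly $q$, which gives $\lceil \frac{m-\sigma}{\sigma}\rceil = q$ as required.

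The only point that warrants care — and the closest thing to an obstacle — is the boundary case $r = \sigma$, where $\frac{m-\sigma}{\sigma} = q$ is itself an integer; here one must invoke $\lceil q\rceil = q$ rather than inadvertently rounding up to $q+1$. This is precisely why the admissible range for $r$ is taken as $0 < r \leq \sigma$ (closed at $\sigma$) rather than $0 \leq r < \sigma$: the chosen normalization keeps the numerator $m-\sigma = (q-1)\sigma + r$ strictly above $(q-1)\sigma$ yet at most $q\sigma$, pinning the ceiling at $q$ across the full parameter range. With this identity in hand the corollary is immediate, and because the theorem's minimum is actually attained — the extremal graph being exhibited in Theorem~\ref{capsm:Th:min_MWBMs_Existence} — the value $\lceil \frac{m-\sigma}{\sigma}\rceil + 1$ is a \emph{tight} lower bound rather than merely a lower bound.
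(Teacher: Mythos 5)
Your proposal is correct and takes essentially the same route as the paper: both reduce the corollary to the arithmetic identity $q = \lceil \frac{m-\sigma}{\sigma} \rceil$ by substituting $m = q\sigma + r$ with $0 < r \leq \sigma$ and evaluating the ceiling. Your version is in fact slightly more careful than the paper's one-line computation, since you make explicit the interval argument $(q-1, q]$ and the boundary case $r = \sigma$ that the paper leaves implicit.
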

\begin{proof}
Since $m  \geq \sigma$, we can always write $m$ as 
$q \sigma + r$ for some $q,\,r \in \nint$ where $0 < r \leq \sigma$. Then the term $\lceil  \frac{m-\sigma}{\sigma}  \rceil$ can be written as 
\[\Big\lceil  \frac{m-\sigma}{\sigma}  \Big\rceil 
=\Big\lceil  \frac{q \sigma + r-\sigma}{\sigma}  \Big\rceil
=\Big\lceil  \frac{(q-1) \sigma + r}{\sigma}  \Big\rceil
= (q-1)+1=q.
\]
Hence the statement in this corollary is equivalent to 
 Theorem~\ref{capsm:Th:min_MWBMs}.
\end{proof}

The following theorem is for the partition of graphs in $\Gl$. 
The proof is trivial. Note that for  $0<m<\sigma$, the term 
$\Big\lceil  \frac{m-\sigma}{\sigma}  \Big\rceil+1=1$.

\begin{theorem}
For the partition  
$ \Gl \equiv \{\textit{G}=(\varSigma_P \cup \varSigma_T,E,\textit{Wt})~|~ \sigma = |\varSigma_P|=|\varSigma_T|, m=\textit{Wt}(G) \text{ and } m < \sigma \}$ 
$$ \min\limits_{G \in \Gl}\{ \textit{Wt}(\textit{mwm}(G)) \} = 1.$$
\end{theorem}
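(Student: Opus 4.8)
The plan is to establish the equality by splitting it into the two inequalities $\min \geq 1$ and $\min \leq 1$, proving the former as a uniform lower bound holding for \emph{every} graph in $\Gl$ and the latter by exhibiting a single witness graph that attains the value~$1$.

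First I would prove the lower bound. Because the weight function takes values in $\mathbb{N}$ and $m = \textit{Wt}(G) \geq 1$, every graph $G \in \Gl$ has at least one edge. Selecting any single edge $e$ yields a matching $\{e\}$ of weight $\textit{Wt}(e) \geq 1$, so $\textit{Wt}(\textit{mwm}(G)) \geq \textit{Wt}(e) \geq 1$ for all $G \in \Gl$, whence $\min_{G \in \Gl}\{ \textit{Wt}(\textit{mwm}(G)) \} \geq 1$.

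Next I would construct a witness graph attaining the value $1$; this presupposes $\Gl \neq \emptyset$, which forces $\sigma \geq 2$, and since $0 < m < \sigma$ we have $m \leq \sigma-1$. Take the star $G$ centred at $u_1 \in \varSigma_P$ and joined to the $m$ distinct vertices $v_1,\ldots,v_m \in \varSigma_T$ (available because $m < \sigma = |\varSigma_T|$), with each of these $m$ edges carrying unit weight. Then $\textit{Wt}(G) = m$, so $G \in \Gl$. Since every edge is incident on the common vertex $u_1$, no two edges are disjoint, so any matching of $G$ contains at most one edge; as all edge weights equal $1$, we obtain $\textit{Wt}(\textit{mwm}(G)) = 1$, giving $\min_{G \in \Gl}\{ \textit{Wt}(\textit{mwm}(G)) \} \leq 1$. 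Combining the two bounds delivers the claimed equality.

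I expect no genuine obstacle here, which is precisely why the authors call the proof trivial. The only points deserving care are the implicit non-emptiness of $\Gl$ (guaranteed by the ambient hypotheses once $\sigma \geq 2$) and the remark that, with positive integer weights, any matching of weight strictly below $2$ can use at most a single unit-weight edge; this is exactly what pins the minimum down to $1$ rather than some larger value, matching the observation that $\lceil \frac{m-\sigma}{\sigma} \rceil + 1 = 1$ for $0 < m < \sigma$.
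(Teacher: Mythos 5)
Your proof is correct and supplies exactly the details the paper leaves implicit when it declares this proof ``trivial'': the lower bound $\min \geq 1$ follows since positive integer weights and $m \geq 1$ force every $G \in \Gl$ to contain an edge, and your star witness (one vertex of $\varSigma_P$ joined by $m$ unit-weight edges to $m$ distinct vertices of $\varSigma_T$, possible since $m < \sigma$) attains the value $1$. This matches the paper's intended argument, including its observation that $\bigl\lceil \frac{m-\sigma}{\sigma} \bigr\rceil + 1 = 1$ for $0 < m < \sigma$, so nothing further is needed.
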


\section{Conclusion}
\label{Sec:tlb_conclusion}
Thus we have given a tight lower bound  $\lceil  \frac{m-\sigma}{\sigma} \rceil+1$ for the weights of maximum weight matching of bipartite graphs each having fixed weight as $m$ and size of a vertex partition as $\sigma$. 
The above finding may be applied in stringology,  monitoring computer network, computer vision, pattern recognition, machine learning, compiler design with cloud architecture and telecommunication network to minimize the cost, time with least traffic load and critical path routing. 



\bibliographystyle{plain}   
\bibliography{references}

\end{document}